\RequirePackage{amsmath}
\documentclass[runningheads,a4paper]{llncs}

\usepackage{amssymb,amsmath,tikz,changepage}
\setcounter{tocdepth}{3}
\usepackage{graphicx}

\usepackage{url}
\usepackage{enumerate}

\usepackage{algorithm}
\usepackage{algpseudocode}

\usepackage{subcaption}
\captionsetup{compatibility=false}

\makeatletter
\newcommand{\text@hyphens}{\mathcode`\-=`\-\relax}
\newcommand{\id}[1]{\ensuremath{\mathit{\text@hyphens#1}}}
\makeatother

\begin{document}

\mainmatter  
\title{Approximate Maximin Shares for Groups of Agents}
\titlerunning{Approximate Maximin Shares for Groups of Agents}
\author{Warut Suksompong%
}
\authorrunning{W. Suksompong}
\institute{Department of Computer Science, Stanford University\\
353 Serra Mall, Stanford, CA 94305, USA\\
\email{warut@cs.stanford.edu}\\
}
\maketitle

\begin{abstract}
We investigate the problem of fairly allocating indivisible goods among interested agents using the concept of maximin share. Procaccia and Wang showed that while an allocation that gives every agent at least her maximin share does not necessarily exist, one that gives every agent at least $2/3$ of her share always does. In this paper, we consider the more general setting where we allocate the goods to \emph{groups} of agents. The agents in each group share the same set of goods even though they may have conflicting preferences. For two groups, we characterize the cardinality of the groups for which a positive approximation of the maximin share is possible regardless of the number of goods. We also show settings where an approximation is possible or impossible when there are several groups.
\end{abstract}

\section{Introduction}
\label{sec:intro}

We consider the problem of fairly allocating indivisible goods to interested agents, a task that occurs frequently in the society and has been a subject of study for decades in economics and more recently in computer science. Several notions of fairness have been proposed to this end; the most popular ones include \emph{envy-freeness} \cite{Foley67,Varian74} and \emph{proportionality} \cite{Steinhaus48}. Envy-freeness stipulates that each agent likes her bundle at least as much as that of any other agent, while proportionality requires that each agent receive her proportional share in the allocation. While an allocation satisfying both notions can be obtained when we deal with divisible goods such as cake or land, this is not the case for indivisible goods like houses or cars. Indeed, if there is one indivisible good and several agents, some agent is necessarily left empty-handed and neither of the notions can be satisfied. In fact, the same example shows that even a multiplicative approximation of these notions cannot be guaranteed.\footnote{As a result of the lack of existence guarantee for these fairness criteria, statements on the asymptotic existence when utilities are drawn from distributions have been shown \cite{AmanatidisMaNi15,DickersonGoKa14,KurokawaPrWa16,ManurangsiSu17,Suksompong16-2}.} 

A notion that was designed to fix this problem and has been a subject of much interest in the last few years is called the \emph{maximin share}, first introduced in this context by Budish \cite{Budish11} based on earlier concepts by Moulin \cite{Moulin90}. The maximin share of an agent can be determined as follows: If there are $n$ agents in total, let the agent of interest partition the goods into $n$ bundles, knowing that she will get the bundle that she likes least. The maximum utility that she can obtain via this procedure is her maximin share. The intuition behind the maximin share is that the agent could feel entitled to this share, since she is already taking the least preferable bundle (according to her own preferences) from the partition. In other words, the indivisibility of the goods cannot be an excuse not to give her this share. An allocation is said to satisfy the \emph{maximin share criterion} if every agent receives a utility no less than her maximin share. 

While an allocation satisfying the maximin share criterion exists in several restricted settings \cite{BouveretLe16} as well as asymptotically \cite{AmanatidisMaNi15,KurokawaPrWa16}, Procaccia and Wang \cite{ProcacciaWa14} showed through rather intricate examples that, somewhat surprisingly, such an allocation does not always exist in general. As a result, the same authors turned the focus to approximation and showed that an allocation that yields a $2/3$-approximation of the maximin share to every agent always exists, and Amanatidis et al. \cite{AmanatidisMaNi15} exhibited an efficient algorithm that computes such an allocation. Amanatidis et al. also presented a simpler algorithm with an approximation ratio of 1/2. Barman and Krishna Murthy \cite{BarmanKr17} gave a simpler efficient $2/3$-approximation algorithm and moreover initiated the study of maximin share for agents with submodular (as opposed to additive) valuation functions. Gourv\`{e}s and Monnot \cite{GourvesMo17} extended the problem to the case where the goods satisfy a matroidal constraint. Farhadi et al. \cite{FarhadiGhHa17} studied the maximin share criterion when agents have unequal entitlements. Recently, Ghodsi et al. \cite{GhodsiHaSe17} improved the approximation ratio to $3/4$. The question of truthfulness for maximin share approximation algorithms has also been explored \cite{AmanatidisBiCh17,AmanatidisBiMa16}. Besides its theoretical appeal, the maximin share has been used in applications including the popular fair division website Spliddit \cite{GoldmanPr14}.

In this paper, we apply the concept of maximin share to a more general setting of fair division in which goods are allocated not to individual agents, but rather to groups of agents who can have varying preferences on the goods. Several practical situations involving fair division fit into this model. For instance, an outcome of a negotiation between countries may have to be approved by members of the cabinets of each country. It could well be the case that one member of a cabinet of a country thinks that the division is fair while another does not. Similarly, in a divorce case, it is not hard to imagine that different members of the family on the husband side and the wife side have varying opinions on a proposed settlement. Another example is a large company or university that needs to divide its resources among competing groups of agents (e.g., departments in a university). The agents in each group have different and possibly misaligned interests; the professors who perform theoretical research may prefer more whiteboards and open space in the department building, while those who engage in experimental work are more likely to prefer laboratories. These situations cannot be modeled by the traditional fair division setting where each recipient of a bundle of goods is represented by a single preference. The added element of having several agents in the same group receiving the same bundle of goods corresponds to the social choice framework of aggregating individual preferences to reach a collective decision \cite{ArrowSeSu02}. Manurangsi and Suksompong \cite{ManurangsiSu17} recently investigated the asymptotic existence of fair divisions under this setting, while Segal-Halevi and Nitzan \cite{SegalhaleviNi15,SegalhaleviNi16} considered fairness in the allocation of \emph{divisible} goods to groups. A related setting, where a subset of indivisible goods is allocated to a (single) group of agents, has also been studied \cite{ManurangsiSu17-2,Suksompong16}.

\subsection{Our Results} 

We extend the maximin share notion to groups in a natural way by calculating the maximin share for each agent using the number of groups instead of the number of agents. When there are two groups, we completely determine the cardinality of agents in the groups for which it is possible to approximate the maximin share within a positive factor that depends only on the number of agents and not on the number of goods. In particular, an approximation is possible when one of the groups contain a single agent, when both groups contain two agents, or when the groups contain three and two agents respectively. In all other cases, no approximation is possible in a strong sense: There exists an instance with only four goods in which some agent with positive maximin share necessarily gets zero utility. The results for the setting with two groups are presented in Section \ref{sec:twogroups} and summarized in Table \ref{table:twogroups}. Even though we leave a gap between the lower and upper bounds of the approximation ratio, the reader should bear in mind that this gap remains even for the previously studied setting where goods are allocated to individual agents. Indeed, the gap between $3/4$ and $1-o(1)$ for maximin share approximation has not been closed despite several works in this direction, while in the simplest case with three agents, the gap remains between 8/9 and $1-o(1)$ \cite{AmanatidisMaNi15,GhodsiHaSe17,GourvesMo17,KurokawaPrWa16,ProcacciaWa14}.\footnote{When there are two agents, it is not hard to see that a cut-and-choose protocol guarantees the full maximin share for both agents.} Thus, despite its relatively simple definition, determining the best approximation ratio for the maximin share is perhaps a harder problem than it might seem at first glance. In addition, although the case of two groups might seem like a rather restricted case, recall that fair division between two \emph{agents}, which is even more restricted, has enjoyed significant attention in the literature (e.g., \cite{BramsFi00,BramsKiKl12,BramsKiKl14}). 

In Section \ref{sec:manygroups}, we generalize to the setting with several groups of agents. On the positive side, we show that a positive approximation is possible if only one group contains more than a single agent. On the other hand, we show on the negative side that when all groups contain at least two agents and one group contains at least five agents, it is possible that some agent with positive maximin share will be forced to obtain zero utility, which means that there is no hope of obtaining an approximation in this case.

\section{Preliminaries}
\label{sec:prelim}

Let $G=\{g_1,g_2,\dots,g_m\}$ denote the set of goods, and $A$ the set of agents. The agents are partitioned into $k$ groups $A_1,\dots,A_k$. Group $A_i$ contains $n_i$ agents; denote by $a_{ij}$ the $j$th agent in group $A_i$. The agents in each group will be collectively allocated a subset of goods $G$; suppose that the agents in group $A_k$ receive the subset $G_k$. 

Each agent $a_{ij}$ has some nonnegative utility $u_{ij}(g)$ for every good $g\in G$. We assume that the agents are endowed with \emph{additive} utility functions, i.e., $u_{ij}(G')=\sum_{g\in G'}u_{ij}(g)$ for any agent $a_{ij}\in A$ and any subset of goods $G'\subseteq G$. This assumption is commonly made in the fair division literature, especially the literature that deals with the maximin share; it provides a reasonable tradeoff between simplicity and expressiveness \cite{AmanatidisMaNi15,BouveretLe16,KurokawaPrWa16,ProcacciaWa14}. Let $\textbf{u}_{ij}=(u_{ij}(g_1),u_{ij}(g_2),\dots,u_{ij}(g_m))$ be the utility vector of agent $a_{ij}$. We refer to a setting with groups of agents, goods, and utility functions as an \emph{instance}.

We now define the maximin share. Let $K=\{1,2,\dots,k\}$. Suppose that in a partition $\mathcal{G}'$ of the goods, the agents in group $A_k$ receive the subset $G'_k$. 

\begin{definition}
\label{def:maximin}
The \emph{maximin share} of agent $a_{ij}$ is defined as 
\[\max_{\mathcal{G}'}\min_{k\in K}u_{ij}(G'_k),\]
 where the maximum ranges over all (complete) partitions $\mathcal{G}'$ of the goods in $G$. Any partition for which this maximum is attained is called a \emph{maximin partition} of agent $a_{ij}$.

An allocation $\mathcal{G}$ of goods to the groups is said to satisfy the \emph{maximin share criterion} if every agent obtains at least her maximin share in $\mathcal{G}$.
\end{definition}

As an example, suppose that there are two groups and four goods, and an agent's utilities for the goods are given by $u(g_1)=6, u(g_2)=3$, and $u(g_3)=u(g_4)=2$. The maximin share of the agent is 6, as can be seen from the partition $(\{g_1\},\{g_2,g_3,g_4\})$. This partition is the unique maximin partition for the agent.

It follows directly from Definition \ref{def:maximin} that the maximin share of an agent $a_{ij}$ is at most $u_{ij}(G)/k$. In addition, any envy-free or proportional allocation also satisfies the maximin share criterion.\footnote{Assuming we extend the notions of envy-freeness and proportionality to groups in a similar manner as we do for the maximin share.} Since an allocation satisfying the maximin share criterion does not always exist even for three groups with one agent each \cite{ProcacciaWa14}, we will be interested in one that \emph{approximates} the maximin share criterion, i.e., gives every agent at least a multiplicative factor $\alpha$ of her maximin share, for some $\alpha\in(0,1)$.

\section{Two Groups of Agents}
\label{sec:twogroups}

In this section, we consider the setting where there are two groups of agents and characterize the cardinality of the groups for which a positive approximation of the maximin share is possible regardless of the number of goods. In particular, suppose that the two groups contain $n_1$ and $n_2$ agents, where we assume without loss of generality that $n_1\geq n_2$. Then a positive approximation is possible when $n_2=1$ as well as when $(n_1,n_2)=(2,2)$ or $(3,2)$.  The results are summarized in Table \ref{table:twogroups}.

\begin{table}
\centering
    \begin{tabular}{| c | c |}
    \hline
     \textbf{Number of agents} & \textbf{Approximation ratio} \\ \hline
     $(n_1,n_2)=(1,1)$ & $\alpha=1$ (Cut-and-choose protocol; see, e.g., \cite{BouveretLe16}) \\ \hline
     $(n_1,n_2)=(2,1)$ & $2/3\leq\alpha\leq 3/4$ (Theorem \ref{thm:tworoundrobin}) \\ \hline
     $n_2=1$ & $2/(n_1+1)\leq \alpha\leq 1/\left\lfloor \left\lfloor\sqrt{2n_1}\right\rfloor/2\right\rfloor$ (Corollary \ref{cor:manyonefloor})   \\ \hline
     $(n_1,n_2)=(2,2)$ & $1/8\leq\alpha\leq 1/2$ (Theorem \ref{thm:twotwo}) \\ \hline
     $(n_1,n_2)=(3,2)$ & $1/16\leq\alpha\leq 1/2$ (Theorem \ref{thm:threetwo}) \\ \hline
     $n_1\geq 4, n_2\geq 2$ & $\alpha=0$ (Proposition \ref{prop:fourtwo}) \\ \hline
     $n_1,n_2\geq 3$ & $\alpha=0$ (Proposition \ref{prop:threethree}) \\ 
    \hline
    \end{tabular}
    \vspace{5mm}
    \caption{Values of the best possible approximation ratio, denoted by $\alpha$, for the maximin share when there are two groups with $n_1\geq n_2$ agents. The approximation ratios hold regardless of the number of goods.}
    \label{table:twogroups}
\end{table}

\subsection{Large Number of Agents: No Possible Approximation}

We start by showing that when the numbers of agents in the groups are large enough, no approximation of the maximin share is possible. Observe that if we prove that a maximin share approximation is not possible for groups with $n_1$ and $n_2$ agents, then it is also not possible for groups with $n_1'\geq n_1$ and $n_2'\geq n_2$ agents, since we would still need to fulfill the approximation for the first $n_1$ and $n_2$ agents in the respective groups.

\begin{proposition}
\label{prop:fourtwo}
If $n_1\geq 4$ and $n_2\geq 2$, then there exists an instance in which some agent with nonzero maximin share necessarily receives zero utility.
\end{proposition}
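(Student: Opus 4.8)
The plan is to exhibit a single instance with four goods $g_1,g_2,g_3,g_4$ in which \emph{every} allocation leaves some agent with positive maximin share empty-handed; since any positive approximation would have to give such an agent strictly positive utility, this rules out an approximation for all $\alpha\in(0,1)$. By the reduction noted just before the proposition, it suffices to treat the boundary case $n_1=4$ and $n_2=2$.

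I would give each relevant agent a \emph{pair valuation}: a valuation assigning utility $1$ to exactly two of the four goods and $0$ to the rest. For such an agent, placing her two valued goods in different bundles of a two-part partition gives $\min=1$, and since her total utility is $2$ the maximin share is at most $u_{ij}(G)/k=1$; hence her maximin share equals $1>0$. The key observation is that, in a two-group allocation, a pair-valuation agent receives utility $0$ precisely when \emph{both} of her valued goods are placed in the bundle given to the other group.

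The construction assigns to the four agents of group $A_1$ the pairs $\{g_1,g_3\},\{g_1,g_4\},\{g_2,g_3\},\{g_2,g_4\}$, and to the two agents of group $A_2$ the pairs $\{g_1,g_2\}$ and $\{g_3,g_4\}$. Viewing the goods as vertices and each pair as an edge, the $A_1$-pairs form the complete bipartite graph between $\{g_1,g_2\}$ and $\{g_3,g_4\}$, while the $A_2$-pairs form the complementary perfect matching. Fix any allocation giving bundle $G_1$ to $A_1$ and $G_2$ to $A_2$. No agent of $A_1$ is shut out exactly when $G_2$ contains none of the four bipartite edges, which (since an independent set of the complete bipartite graph lies inside one side) means $G_2\subseteq\{g_1,g_2\}$ or $G_2\subseteq\{g_3,g_4\}$; no agent of $A_2$ is shut out exactly when $G_1$ contains neither matching edge.

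The crux, and the step I expect to require the most care, is verifying that these two ``no shut-out'' conditions cannot hold simultaneously for any partition $(G_1,G_2)$. The argument is a short case analysis on an $A_1$-safe $G_2$: if $G_2\subseteq\{g_1,g_2\}$ then $\{g_3,g_4\}\subseteq G_1$, so $G_1$ contains the matching edge $\{g_3,g_4\}$ and the corresponding $A_2$-agent receives $0$; symmetrically, if $G_2\subseteq\{g_3,g_4\}$ then $\{g_1,g_2\}\subseteq G_1$ and the other $A_2$-agent receives $0$. Either way some agent with maximin share $1$ is left with zero utility, so no allocation avoids this, which completes the argument.
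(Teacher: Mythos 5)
Your proposal is correct and is essentially the paper's own proof: the instance you construct (the four bipartite pairs between $\{g_1,g_2\}$ and $\{g_3,g_4\}$ for the first group, and the complementary matching $\{g_1,g_2\},\{g_3,g_4\}$ for the second) is exactly the paper's instance up to relabeling agents, and your case analysis is the contrapositive of the paper's argument that any allocation protecting the second group must give it one good from each side, thereby shutting out a first-group agent.
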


\begin{proof}
Assume that $n_1=4$ and $n_2=2$, and suppose that there are four goods. The utilities of the agents in the first group are $\textbf{u}_{11}=(0,1,0,1)$, $\textbf{u}_{12}=(0,1,1,0)$, $\textbf{u}_{13}=(1,0,0,1)$, and $\textbf{u}_{14}=(1,0,1,0)$, while the utilities of those in the second group are $\textbf{u}_{21}=(1,1,0,0)$ and $\textbf{u}_{22}=(0,0,1,1)$. 

In this example, every agent has a maximin share of 1. To guarantee nonzero utility for the agents in the second group, we must allocate at least one of the first two goods and at least one of the last two goods to the group. But this implies that some agent in the first group receives zero utility. \qed
\end{proof}

\begin{proposition}
\label{prop:threethree}
If $n_1, n_2\geq 3$, then there exists an instance in which some agent with nonzero maximin share necessarily receives zero utility.
\end{proposition}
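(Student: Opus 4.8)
The plan is to mirror the structure of Proposition~\ref{prop:fourtwo}. By the monotonicity observation stated just before that proposition, it suffices to exhibit a single bad instance for the smallest case $n_1=n_2=3$, since the impossibility then propagates to all $n_1,n_2\geq 3$. As in Proposition~\ref{prop:fourtwo} I would use only four goods $g_1,g_2,g_3,g_4$ together with $0/1$ valuations, so that each agent is fully described by her \emph{support}, the set of goods she values. Because there are two groups, an allocation is simply a choice of a bundle $G_1$ with $G_2=G\setminus G_1$; a group-$1$ agent with support $S$ receives positive utility iff $G_1\cap S\neq\emptyset$, while a group-$2$ agent with support $T$ receives positive utility iff $T\not\subseteq G_1$. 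Hence all group-$1$ agents are satisfied exactly when $G_1$ is a \emph{transversal} (hitting set) of the group-$1$ supports, and all group-$2$ agents are satisfied exactly when $G_1$ contains no group-$2$ support.

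The observation I would lean on is a covering/duality reformulation: \emph{no} allocation can satisfy all six agents iff every transversal of the group-$1$ supports contains some group-$2$ support; and since both ``being a transversal'' and ``containing a fixed set'' are upward closed, this is equivalent to requiring that every \emph{minimal} transversal of the group-$1$ supports contains some group-$2$ support. This dictates the design: choose the three group-$1$ supports to have exactly three minimal transversals, each of size at least two, and then take the three group-$2$ supports to be precisely those minimal transversals. Concretely I would let the group-$1$ supports be the edges of the path on $g_1,g_2,g_3,g_4$, namely the pairs $\{g_1,g_2\}$, $\{g_2,g_3\}$, $\{g_3,g_4\}$, that is
\[
\textbf{u}_{11}=(1,1,0,0),\quad \textbf{u}_{12}=(0,1,1,0),\quad \textbf{u}_{13}=(0,0,1,1),
\]
whose minimal vertex covers are $\{g_2,g_3\}$, $\{g_2,g_4\}$, $\{g_1,g_3\}$, and then set
\[
\textbf{u}_{21}=(0,1,1,0),\quad \textbf{u}_{22}=(0,1,0,1),\quad \textbf{u}_{23}=(1,0,1,0).
\]
By construction every transversal of the group-$1$ supports contains one of the three group-$2$ supports, so in every allocation some agent is forced to get zero utility.

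To complete the argument I would verify two routine points. First, each of the six agents values exactly two goods at $1$ and the rest at $0$, so by placing her two valued goods in different parts of a two-bundle partition she guarantees herself a minimum of $1$; as her total value is $2$ over two bundles, her maximin share is exactly $1>0$. Second, I would confirm the covering claim, most transparently by inspecting the few subsets of $\{g_1,g_2,g_3,g_4\}$ directly: any $G_1$ meeting all three path-edges necessarily contains one of $\{g_2,g_3\}$, $\{g_2,g_4\}$, $\{g_1,g_3\}$, so the matching group-$2$ agent is left with nothing in $G_2$. I expect the only genuine obstacle to be the \emph{design} step: with just three agents per group (versus four in Proposition~\ref{prop:fourtwo}) far fewer constraints are available, so the group-$1$ and group-$2$ feasibility regions must be made to clash exactly. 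The minimal-transversal reformulation is what turns this into a systematic search rather than guesswork, and I would note that a triangle on three goods works equally well if one prefers an instance with only three goods.
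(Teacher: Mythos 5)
Your proposal is correct and takes essentially the same approach as the paper: an explicit four-good, $0/1$-valuation instance for $(n_1,n_2)=(3,3)$, combined with the monotonicity observation preceding Proposition \ref{prop:fourtwo}. In fact, your instance coincides with the paper's up to a relabeling of the goods (the paper's group-$1$ supports also form a path on the four goods, and its group-$2$ supports are exactly that path's three minimal vertex covers); your minimal-transversal duality framing merely replaces the paper's direct case analysis with a cleaner verification.
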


\begin{proof}
Assume that $n_1=n_2=3$, and suppose that there are four goods. The utilities of the agents in the first group are $\textbf{u}_{11}=(0,1,0,1)$, $\textbf{u}_{12}=(1,0,0,1)$, and $\textbf{u}_{13}=(1,0,1,0)$, while the utilities of those in the second group are $\textbf{u}_{21}=(1,1,0,0)$, $\textbf{u}_{22}=(0,0,1,1)$, and $\textbf{u}_{23}=(1,0,0,1)$. 

In this example, every agent has a maximin share of 1. To guarantee nonzero utility for the agents in the second group, we must allocate at least one of the first two goods and at least one of the last two goods to the group. If we allocate at least three goods to the second group, some agent in the first group is left with zero utility. Else, if we allocate exactly two goods to the second group, we may allocate goods $\{g_1,g_3\}$, $\{g_1,g_4\}$, or $\{g_2,g_4\}$, which leaves goods $\{g_2,g_4\}$, $\{g_2,g_3\}$, or $\{g_1,g_3\}$ to the first group, respectively. But in each of these cases, some agent in the first group receives zero utility. \qed
\end{proof}

\subsection{Approximation via Modified Round-Robin Algorithm}

When both groups contain a single agent, it is known that a simple ``cut-and-choose'' protocol similar to a famous cake-cutting protocol yields the full maximin share for both agents (see, e.g., \cite{BouveretLe16}). It turns out that as soon as at least one group contains more than one agent, the full maximin share can no longer be guaranteed. We next consider the simplest such case where the groups contain one and two agents, respectively. The maximin share approximation algorithm for this case is similar to the modified round-robin algorithm that yields a $1/2$-approximation for an arbitrary number of agents \cite{AmanatidisMaNi15}, but we will need to make some adjustments to handle more than one agent being in the same group. For the algorithm, we will need the following two lemmas, which admit rather straightforward proofs.

\begin{lemma}[\cite{AmanatidisMaNi15}]
\label{lem:roundrobin}
Suppose that each group contains one agent. Consider a round-robin algorithm in which the agents take turns taking their favorite good from the remaining goods. In the resulting allocation, the envy that an agent has toward any other agent is at most the maximum utility of the former agent for any single good. Moreover, if an agent is ahead of another agent in the round-robin ordering, then the former agent has no envy toward the latter agent.
\end{lemma}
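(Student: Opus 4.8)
The plan is to analyze the sequence of picks directly and exploit the greedy nature of each choice. Since each group contains a single agent, I identify an agent with her group and write $G_i$ for the bundle she receives and $u_i$ for her utility function. Fix the round-robin ordering, let a \emph{round} be one full pass through all the agents, and denote by $g_i^r$ the good that agent $i$ picks in round $r$. The one observation that drives everything is that when an agent picks, she takes her favorite good among those still available, so her utility for her own pick is at least her utility for any good picked \emph{later} by anyone.

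First I would establish the second (no-envy) claim. Suppose agent $i$ is ahead of agent $j$. Then in every round $r$ agent $i$ picks before agent $j$, so at the moment agent $i$ selects $g_i^r$ the good $g_j^r$ is still available, and greediness gives $u_i(g_i^r)\geq u_i(g_j^r)$. Because $i$ being ahead of $j$ means $i$ receives at least as many goods as $j$, summing these inequalities round by round yields $u_i(G_i)\geq u_i(G_j)$, so agent $i$ has no envy toward agent $j$.

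For the first (bounded-envy) claim, the interesting case is when agent $j$ is ahead of agent $i$, so $j$ picks before $i$ in each round. Here I would pair agent $i$'s pick $g_i^r$ in round $r$ with agent $j$'s pick $g_j^{r+1}$ in the \emph{following} round: since agent $i$ selects $g_i^r$ strictly before agent $j$ selects $g_j^{r+1}$, greediness gives $u_i(g_i^r)\geq u_i(g_j^{r+1})$. Summing over all rounds telescopes, so that every good of agent $j$ except her very first pick $g_j^1$ is dominated by a distinct good of agent $i$, whence $u_i(G_i)\geq u_i(G_j)-u_i(g_j^1)$. The envy $u_i(G_j)-u_i(G_i)$ is therefore at most $u_i(g_j^1)\leq\max_g u_i(g)$, as claimed; the case where $i$ is ahead of $j$ is already covered by the no-envy claim.

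The only routine part that remains is the bookkeeping when the number of goods is not a multiple of the number of groups, so that different agents receive different numbers of goods. This merely changes how many terms appear in each sum: it never leaves an unpaired good on agent $i$'s side when $i$ is ahead, and never more than the single good $g_j^1$ unpaired on agent $j$'s side. I expect this edge-case accounting to be the only mild obstacle, since the core argument rests entirely on the greedy-choice inequality that an agent's own pick weakly dominates any good chosen later.
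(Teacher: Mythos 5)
Your proof is correct. Note that the paper does not supply its own proof of this lemma --- it is stated as a known result imported from Amanatidis et al.\ \cite{AmanatidisMaNi15} --- and your argument is exactly the standard one: round-by-round domination $u_i(g_i^r)\geq u_i(g_j^r)$ when $i$ picks before $j$, and the shifted pairing $u_i(g_i^r)\geq u_i(g_j^{r+1})$ when $j$ is ahead, which leaves only $j$'s first pick uncharged and bounds the envy by $\max_g u_i(g)$. Your handling of the edge cases is also sound, since whenever $j$ picks in a given round, every agent (in particular $i$) must have picked in all earlier rounds, so no good of $j$ other than $g_j^1$ is ever left unpaired, and any surplus goods on $i$'s side only help by nonnegativity of utilities.
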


\begin{lemma}[\cite{AmanatidisMaNi15,BouveretLe16}]
\label{lem:addonemfs}
Given an arbitrary instance in which each group contains one agent. If we allocate an arbitrary good to an agent as her only good, then the maximin share of any remaining agent with respect to the remaining goods does not decrease. 
\end{lemma}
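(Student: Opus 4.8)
The plan is to prove the statement by exhibiting, for the post-allocation instance, an explicit partition of the remaining goods into the reduced number of bundles whose minimum value (as seen by the remaining agent in question) is at least that agent's \emph{original} maximin share. Since the maximin share is defined as the maximum over all partitions of the minimum bundle value, producing one such witness partition immediately lower-bounds the new maximin share by the old one, which is exactly what we want.

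To carry this out, since each group contains a single agent I index agents directly. Let $a$ be a remaining agent, let $g$ be the good that is handed to some other agent as her only good, and let $\mu$ denote $a$'s maximin share in the original instance over all $k$ parts. I would fix a maximin partition $(P_1,\dots,P_k)$ of the full set of goods attaining this value, so that $u_a(P_\ell)\ge\mu$ for every $\ell$. The good $g$ lies in exactly one of these parts, say $P_j$. The key step is to discard the \emph{entire} part $P_j$ rather than just the good $g$: the remaining parts $P_1,\dots,P_{j-1},P_{j+1},\dots,P_k$ already form a partition of $G\setminus P_j$ into $k-1$ bundles, each of value at least $\mu$ for $a$. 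It then remains only to place the leftover goods $P_j\setminus\{g\}$, which I would assign arbitrarily among these $k-1$ bundles.

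Because the utilities are nonnegative and additive, appending the leftover goods can only increase each bundle's value, so every one of the $k-1$ resulting bundles still has value at least $\mu$ for $a$. This yields a valid partition of $G\setminus\{g\}$ into $k-1$ bundles witnessing that $a$'s maximin share with respect to the remaining goods is at least $\mu$. There is no real obstacle here; the only point one must get right is to throw away the whole bundle containing $g$ and then invoke monotonicity to reabsorb its leftover goods without ever dropping a bundle below the threshold $\mu$.
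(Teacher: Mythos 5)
Your proof is correct and is precisely the standard argument for this lemma: fix a maximin partition of the original instance, discard the bundle containing the allocated good $g$, reassign its leftover goods arbitrarily to the other $k-1$ bundles, and use nonnegativity plus additivity to conclude that each of the $k-1$ resulting bundles still has value at least the original maximin share, which witnesses the claimed bound for the reduced instance. Note that the paper does not actually reproduce a proof of this lemma --- it cites it from Amanatidis et al.\ and Bouveret--Lema\^{i}tre as admitting a ``rather straightforward proof'' --- and your witness-partition argument is exactly that straightforward proof, with the one essential subtlety (dropping the whole bundle containing $g$, not just $g$ itself) handled correctly.
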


We are now ready to handle the case with one and two agents in the groups.

\begin{theorem}
\label{thm:tworoundrobin}
Let $(n_1,n_2)=(2,1)$, and suppose that $\alpha$ is the best possible approximation ratio for the maximin share. Then $2/3\leq \alpha\leq 3/4$.
\end{theorem}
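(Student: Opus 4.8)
The plan is to prove the two inequalities by separate arguments: a modified round-robin algorithm for the lower bound $\alpha\ge 2/3$, and a single explicit four-good instance for the upper bound $\alpha\le 3/4$. Throughout, write $\mu_{ij}$ for the maximin share of $a_{ij}$, and recall that group $A_1=\{a_{11},a_{12}\}$ receives a bundle $G_1$ while the singleton group $A_2=\{a_{21}\}$ receives $G_2=G\setminus G_1$. The first move I would make is to dispose of the lone agent $a_{21}$: take her maximin partition $G=P\cup Q$ with $u_{21}(P),u_{21}(Q)\ge\mu_{21}$, and observe that assigning either $P$ or $Q$ to $A_2$ already secures her \emph{full} maximin share. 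Hence the entire difficulty is to make the complementary part good \emph{simultaneously} for the two, possibly conflicting, agents of $A_1$.

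\textbf{Lower bound ($\alpha\ge 2/3$).} For this direction I would adapt the modified round-robin of \cite{AmanatidisMaNi15}. First test the two candidate bundles $P$ and $Q$ for $A_1$: by additivity $u_{11}(P)+u_{11}(Q)=u_{11}(G)\ge 2\mu_{11}$, so at least one part is worth $\ge\mu_{11}>\frac23\mu_{11}$ to $a_{11}$, and symmetrically for $a_{12}$; moreover if a part drops below $\frac23\mu_{11}$ then its complement exceeds $\frac43\mu_{11}$. Thus assigning $P$ or $Q$ to $A_1$ already satisfies both agents at the $2/3$ level unless $a_{11}$ and $a_{12}$ are content only with \emph{opposite} parts. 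The heart of the proof is this opposite-preference case: since $a_{11}$ values her favored part above $\frac43\mu_{11}$, I would carve out of it a sub-bundle worth $\ge\frac23\mu_{11}$ to $a_{11}$, do the same on the other part for $a_{12}$, give the union to $A_1$, and leave the remainder to $a_{21}$; because $\mu_{21}\le u_{21}(G)/2$, agent $a_{21}$ keeps her full share provided the carved union is worth at most $\frac12 u_{21}(G)$ to her. To control this I would first preprocess goods that are large relative to some agent's maximin share, allocating each such good and invoking Lemma \ref{lem:addonemfs} so that the remaining agents' shares do not decrease, and then use the round-robin envy bound of Lemma \ref{lem:roundrobin} to limit, by the value of a single (now small) good, the shortfall incurred during the carving. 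The hard part will be making the two carvings simultaneously cheap for $a_{21}$ while each stays above the $\frac23$ threshold for its own agent; this coupling of two agents sharing one bundle, in the presence of goods that are valuable to several agents at once, is precisely what forces the guarantee down from $1$ to $2/3$.

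\textbf{Upper bound ($\alpha\le 3/4$).} For this direction I would exhibit one instance on four goods and verify it directly. Take $\textbf{u}_{11}=(3,1,1,1)$ and $\textbf{u}_{12}=(1,3,1,1)$ for $A_1$, and $\textbf{u}_{21}=(3,3,1,1)$ for $A_2$; one computes $\mu_{11}=\mu_{12}=3$ (from $(\{g_1\},\{g_2,g_3,g_4\})$ and $(\{g_2\},\{g_1,g_3,g_4\})$ respectively) and $\mu_{21}=4$ (from $(\{g_1,g_3\},\{g_2,g_4\})$). The design interlocks the group-$A_1$ conflict with $a_{21}$'s valuation: $a_{11}$ and $a_{12}$ each demand one of the conflicting goods $g_1,g_2$, while $a_{21}$ also values exactly $g_1$ and $g_2$ highly. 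Making both agents of $A_1$ strictly exceed $\frac34\mu=\frac94$ forces $G_1$ either to contain both $g_1$ and $g_2$, or to contain (a superset of) one of the three-good bundles $\{g_1,g_3,g_4\}$ or $\{g_2,g_3,g_4\}$; in every such case $a_{21}$'s complementary bundle is worth at most $3=\frac34\mu_{21}$. Since there are only finitely many choices of $G_1$, I would finish with a short case analysis confirming that no allocation gives all three agents strictly more than $3/4$ of their shares, and that the ratio $3/4$ is attained (e.g.\ by $G_1=\{g_1,g_3,g_4\}$, which yields values $5,3,3$ against shares $3,3,4$). The finite verification is routine; the only creative step is calibrating the valuations so that the two constraints pin the achievable ratio at exactly $3/4$.
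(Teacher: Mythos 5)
Your upper bound is correct: the instance $\textbf{u}_{11}=(3,1,1,1)$, $\textbf{u}_{12}=(1,3,1,1)$, $\textbf{u}_{21}=(3,3,1,1)$ indeed has $\mu_{11}=\mu_{12}=3$, $\mu_{21}=4$, and the case analysis you outline (either $\{g_1,g_2\}\subseteq G_1$, or $G_1=\{g_1,g_3,g_4\}$, or $G_1=\{g_2,g_3,g_4\}$) does show every allocation leaves some agent at ratio at most $3/4$. This is a different instance from the paper's $(3,1,2,2),(2,3,2,1),(3,2,2,1)$, but the method is identical, and your verification goes through.

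The lower bound, however, has a genuine gap, and it sits exactly where you flag it (``the hard part will be making the two carvings simultaneously cheap for $a_{21}$''). That step is not merely unfinished; the strategy as specified fails on concrete instances. Take $\textbf{u}_{11}=(4,1,1,0)$, $\textbf{u}_{12}=(1,0,4,1)$, $\textbf{u}_{21}=(3,1,3,1)$. Then $\mu_{11}=\mu_{12}=2$ and $\mu_{21}=4$, and $(P,Q)=(\{g_1,g_2\},\{g_3,g_4\})$ is a legitimate maximin partition for $a_{21}$. Your opposite-preference case arises: $u_{11}(Q)=1<\tfrac23\mu_{11}$ and $u_{12}(P)=1<\tfrac23\mu_{12}$. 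Now any $C_1\subseteq P$ with $u_{11}(C_1)\geq\tfrac23\mu_{11}=\tfrac43$ must contain $g_1$, and any $C_2\subseteq Q$ with $u_{12}(C_2)\geq\tfrac43$ must contain $g_3$, so $u_{21}(C_1\cup C_2)\geq 6$ and $a_{21}$ is left with at most $2<\tfrac23\mu_{21}=\tfrac83$. A good allocation does exist, namely $G_1=\{g_1,g_4\}$, $G_2=\{g_2,g_3\}$ (it even gives everyone her full share), but it is invisible to your framework: $a_{12}$'s value in it comes from \emph{combining} goods across $P$ and $Q$ ($u_{12}(G_1\cap Q)=1<\tfrac43$), whereas you require each carving alone to clear the $\tfrac23$ threshold for its own agent. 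Note also that $a_{21}$ has another maximin partition $(\{g_1,g_4\},\{g_2,g_3\})$ for which your Step 2 would succeed; since maximin partitions are not unique and you give no rule for selecting a good one, the argument cannot be repaired just by appealing to ``the'' maximin partition. The paper's proof avoids this coupling problem entirely: it never computes $a_{21}$'s maximin partition. Instead it normalizes every agent's total value to $1$, observes that every maximin share is then at most $1/2$ (so a bundle worth $1/3$ suffices), lets agents claim single goods worth at least $1/3$, and settles the remaining agents by the round-robin algorithm with the envy bound of Lemma \ref{lem:roundrobin} and the monotonicity of Lemma \ref{lem:addonemfs}; the threshold is measured against each agent's \emph{total} utility rather than against subsets of a fixed partition, which is what makes the three-case analysis close.
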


\begin{proof}
We first show the upper bound. Suppose that there are four goods. The utilities of the agents in the first group for the goods are $\textbf{u}_{11}=(3,1,2,2)$ and $\textbf{u}_{12}=(2,3,2,1)$, while the utilities of the agent in the second group are $\textbf{u}_{21}=(3,2,2,1)$. 

In this example, every agent has a maximin share of 4. We will show that any allocation gives some agent a utility of at most 3. Note that an allocation that would give every agent a utility of at least 4 must allocate two goods to both groups. If the fourth and one of the second and third goods are allocated to the second group, the agent gets a utility of 3. Otherwise, one can check that one of the agents in the first group gets a utility of 3.

Next, we exhibit an algorithm that guarantees each agent a $2/3$ fraction of her maximin share. Since we do not engage in interpersonal comparisons of utilities, we may assume without loss of generality that every agent has utility 1 for the whole bundle of goods. Since the maximin share of an agent is always at most $1/2$, it suffices to allocate a bundle worth at least $1/3$ to her.

If some good is worth at least $1/3$ to $a_{21}$, let her take that good. By Lemma \ref{lem:addonemfs}, the maximin shares of $a_{11}$ and $a_{12}$ do not decrease. Since they receive all of the remaining goods, they obtain their maximin share. Hence we may now assume that no good is worth at least $1/3$ to $a_{21}$. We let each of $a_{11}$ and $a_{12}$, in arbitrary order, take a good worth at least $1/3$ if there is any. There are three cases.
\begin{itemize}
\item Both of them take a good. Then each of them gets a utility of at least $1/3$. Since every good is worth less than $1/3$ to $a_{21}$, she also gets a utility of at least $1-1/3-1/3=1/3$ from the remaining goods.
\item One of them takes a good and the other does not. Assume without loss of generality that $a_{11}$ is the agent who takes a good. We run the round-robin algorithm on $a_{12}$ and $a_{21}$ using the remaining goods, starting with $a_{21}$. Since every good is worth less than $1/3$ to $a_{21}$, the value of the whole bundle of goods minus the good that $a_{11}$ takes is at least $2/3$. By Lemma \ref{lem:roundrobin}, $a_{21}$ gets a utility of at least $1/2\times 2/3 = 1/3$. Similarly, the envy of $a_{12}$ toward $a_{21}$ is at most the maximum utility of $a_{12}$ for a good allocated during the round-robin algorithm, which is at most $1/3$. This implies that $a_{21}$'s bundle is worth at most $2/3$ to $a_{12}$, and hence $a_{12}$'s bundle in the final allocation (i.e., her bundle from the round-robin algorithm combined with the good that $a_{11}$ takes) is worth at least $1/3$ to her. 
\item Neither of them takes a good. We run the round-robin algorithm on all three agents, starting with $a_{21}$. By Lemma \ref{lem:roundrobin}, $a_{21}$ gets a utility of at least $1/3$. The envy of $a_{11}$ toward $a_{21}$ is at most the maximum utility of $a_{11}$ for a good, which is at most $1/3$. Hence $a_{21}$'s bundle is worth at most $2/3$ to $a_{11}$, which means that $a_{11}$'s bundle in the final allocation (i.e., her bundle combined with $a_{12}$'s bundle) is worth at least $1/3$ to her. An analogous argument holds for $a_{12}$.
\end{itemize}
This covers all three possible cases. \qed
\end{proof}

Next, we generalize to the setting where the first group contains an arbitrary number of agents while the second group contains a single agent. In this case, an algorithm similar to that in Theorem \ref{thm:tworoundrobin} can be used to obtain a constant factor approximation when the number of agents is constant. In addition, we show that the approximation ratio necessarily degrades as the number of agents grows.

\begin{algorithm}
\caption{Algorithm for approximate maximin share when the groups contain $n_1\geq 2$ and $n_2=1$ agents (Theorem \ref{thm:manyone}).}
\label{alg:algomanyone}
\begin{algorithmic}[1]
\Procedure{Approximate\textendash Maximin\textendash Share\textendash 1}{}
\If{Agent $a_{21}$ values some good $g$ at least $\frac{1}{n_1+1}$}
\State Allocate $g$ to $a_{21}$ and the remaining goods to the first group.
\Else
\State Let each agent in the first group, in arbitrary order, take a good worth at least $\frac{1}{n_1+1}$ to her if there is any.
\State Allocate the remaining goods to the agents who have not taken a good using the round-robin algorithm, starting with $a_{21}$.
\EndIf
\EndProcedure
\end{algorithmic}
\end{algorithm}

\begin{theorem}
\label{thm:manyone}
Let $n_1\geq 2$ and $n_2=1$, and suppose that $\alpha$ is the best approximation ratio for the maximin share. Then $\frac{2}{n_1+1}\leq \alpha\leq \frac{1}{\lfloor f(n_1)/2\rfloor}$, where $f(n_1)$ is the largest integer such that $\binom{f(n_1)}{2}\leq n_1$.
\end{theorem}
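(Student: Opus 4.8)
The statement has two halves, which I would establish independently.

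For the lower bound $\alpha\geq\frac{2}{n_1+1}$, the plan is to analyze Algorithm \ref{alg:algomanyone} exactly as in Theorem \ref{thm:tworoundrobin}. I would normalize every agent's total utility to $1$; since the maximin share is always at most $\frac12$, it suffices to hand each agent a bundle worth at least $\frac{1}{n_1+1}$, because then her value is $\frac{1}{n_1+1}=\frac{2}{n_1+1}\cdot\frac12\geq\frac{2}{n_1+1}\cdot(\text{her maximin share})$. In the first branch, where $a_{21}$ values some good $g$ at least $\frac{1}{n_1+1}$, agent $a_{21}$ clears the threshold directly, and each group-$1$ agent receives all of $G\setminus\{g\}$; removing a single good cannot drop an agent's total below her maximin share (the part of a maximin $2$-partition avoiding $g$ already witnesses this, which is Lemma \ref{lem:addonemfs}), so every group-$1$ agent even gets her \emph{full} share.

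The substantive branch is when every good is worth less than $\frac{1}{n_1+1}$ to $a_{21}$. Any group-$1$ agent who grabs a good in the greedy phase immediately secures value at least $\frac{1}{n_1+1}$, since that good lies in $G_1$. For the round-robin, writing $p$ for the number of grabbers, the $t=n_1-p+1$ participants are $a_{21}$ together with the $n_1-p$ non-grabbers. As the $p$ grabbed goods are each worth less than $\frac{1}{n_1+1}$ to $a_{21}$, the remaining pool is worth more than $\frac{n_1+1-p}{n_1+1}$ to her, and being first she envies no one and thus takes at least a $\frac1t$ share, namely more than $\frac{1}{t}\cdot\frac{n_1+1-p}{n_1+1}=\frac{1}{n_1+1}$. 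For a non-grabber $a_{1j}$ every round-robin good is worth less than $\frac{1}{n_1+1}$ to her, so by Lemma \ref{lem:roundrobin} her envy toward $a_{21}$ is less than $\frac{1}{n_1+1}$. I expect this non-grabber case to be the main obstacle: one must use that group-$1$ goods are \emph{pooled}, so $a_{1j}$ receives all of $G_1=G\setminus B$ where $B$ is $a_{21}$'s bundle, not merely her own round-robin winnings $B_j$. Chaining $u_{1j}(B)\leq u_{1j}(B_j)+\frac{1}{n_1+1}\leq u_{1j}(G_1)+\frac{1}{n_1+1}=1-u_{1j}(B)+\frac{1}{n_1+1}$ gives $u_{1j}(G_1)\geq\frac{n_1}{2(n_1+1)}$, which is at least $\frac{1}{n_1+1}$ precisely because $n_1\geq 2$.

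For the upper bound I would exhibit a single hard instance on $m=f(n_1)$ goods $g_1,\dots,g_f$ (write $f=f(n_1)$). Group $1$ holds one agent for each of the $\binom{f}{2}\leq n_1$ pairs $\{i,j\}$, valuing $g_i$ and $g_j$ at $1$ and all else at $0$ (the leftover $n_1-\binom{f}{2}$ agents get zero valuations or duplicate an existing pair-agent and are harmless), while $a_{21}$ values every good at $1$. Each pair-agent has maximin share $1$ and obtains positive utility only if $G_1$ meets her pair, so any allocation with positive ratio must keep $G_2$ from containing both goods of any pair; since all pairs are present, $G_2$ contains at most one of $g_1,\dots,g_f$. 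Hence $u_{21}(G_2)\leq 1$, whereas $a_{21}$'s maximin share is $\lfloor f/2\rfloor$ from a balanced $2$-partition of the $f$ unit-valued goods, so her ratio is at most $\frac{1}{\lfloor f/2\rfloor}$, giving $\alpha\leq\frac{1}{\lfloor f(n_1)/2\rfloor}$. Here the only real choice is taking $f(n_1)$ as the largest $f$ with $\binom{f}{2}\leq n_1$ so that all pairs fit into group $1$; the remaining counting is immediate.
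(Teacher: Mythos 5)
Your proposal is correct and follows essentially the same route as the paper: the same greedy-then-round-robin algorithm with threshold $\frac{1}{n_1+1}$ analyzed via Lemmas \ref{lem:roundrobin} and \ref{lem:addonemfs}, and the same hard instance with $\binom{f(n_1)}{2}$ pair-agents against an indifferent $a_{21}$ for the upper bound. The only (immaterial) difference is in the non-grabber case, where the paper coarsens the envy bound to $\frac{1}{n_1+1}\leq\frac13$ to conclude the group's bundle is worth at least $\frac13$, while you track the exact constant to get $\frac{n_1}{2(n_1+1)}$; both exceed $\frac{1}{n_1+1}$ for $n_1\geq 2$.
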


\begin{proof}
We first show the upper bound. Let $l=f(n_1)$, and suppose that there are $l$ goods. Let $\binom{l}{2}$ of the agents in the first group positively value a distinct set of two goods. In particular, each of them has utility 1 for both goods in their set and 0 for the remaining goods. Let the agent in the second group have utility 1 for all goods.

In this example, each agent in the first group has a maximin share of 1, while the agent $a_{21}$ has a maximin share of $\lfloor l/2\rfloor$. To guarantee nonzero utility for the $l$ agents in the first group, we must allocate all but at most one good to the group, leaving at most one good for the second group. So the agent in the second group obtains at most a $1/\lfloor l/2\rfloor$ fraction of her maximin share. 

An algorithm that guarantees a $\frac{2}{n_1+1}$-approximation of the maximin share (Algorithm \ref{alg:algomanyone}) is similar to that for the case $n_1=2$ (Theorem \ref{thm:tworoundrobin}). Again, we normalize the utility of each agent for the whole set of goods to 1. First, let $a_{21}$ take a good worth at least $\frac{1}{n_1+1}$ to her if there is any. If she takes a good, we allocate the remaining goods to the first group and are done by Lemma \ref{lem:addonemfs}. Else, we let each of the agents in the first group, in arbitrary order, take a good worth at least $\frac{1}{n_1+1}$ to her if there is any. After that, we run the round-robin algorithm on the agents who have not taken a good, starting with $a_{21}$.

Suppose that $r$ agents in the first group take a good. Each of them obtains a utility of at least $\frac{1}{n_1+1}$. The remaining goods, which are allocated by the round-robin algorithm, are worth a total of at least $\frac{n_1+1-r}{n_1+1}$ to $a_{21}$. Since there are $n_1+1-r$ agents who participate in the round-robin algorithm, and $a_{21}$ is the first to choose, she obtains utility at least $\frac{1}{n_1+1-r}\cdot\frac{n_1+1-r}{n_1+1}=\frac{1}{n_1+1}$. Finally, by Lemma \ref{lem:roundrobin}, each agent in the first group who does not take a good in the first stage has envy at most $\frac{1}{n_1+1}\leq \frac13$ toward $a_{21}$. Hence for such an agent, $a_{21}$'s bundle is worth at most $2/3$, and so the bundle allocated to the first group is worth at least $\frac13\geq \frac{1}{n_1+1}$. \qed
\end{proof}

Algorithm \ref{alg:algomanyone} can be implemented in time polynomial in the number of agents and goods. Also, since $\binom{\lfloor\sqrt{2n_1}\rfloor}{2}\leq \frac{(\sqrt{2n_1})^2}{2}=n_1$, we have the following corollary.

\begin{corollary}
\label{cor:manyonefloor}
Let $n_1\geq 2$ and $n_2=1$, and suppose that $\alpha$ is the best approximation ratio for the maximin share. Then $\frac{2}{n_1+1}\leq \alpha\leq \frac{1}{\left\lfloor \left\lfloor\sqrt{2n_1}\right\rfloor/2\right\rfloor}.$ 
\end{corollary}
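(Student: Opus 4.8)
The plan is to derive the corollary directly from Theorem \ref{thm:manyone} together with the inequality $\binom{\lfloor\sqrt{2n_1}\rfloor}{2}\leq n_1$ noted just before the statement. The lower bound $\frac{2}{n_1+1}\leq\alpha$ is identical to the one in the theorem, so no additional work is required there; the entire effort goes into replacing the somewhat implicit quantity $f(n_1)$ appearing in the theorem's upper bound with the explicit expression $\lfloor\sqrt{2n_1}\rfloor$.

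First I would use the displayed inequality to argue that the integer $\lfloor\sqrt{2n_1}\rfloor$ itself satisfies the defining condition for $f(n_1)$, namely $\binom{m}{2}\leq n_1$. Since $f(n_1)$ is by definition the \emph{largest} integer $m$ with this property, it follows immediately that $f(n_1)\geq\lfloor\sqrt{2n_1}\rfloor$.

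Next I would push this inequality through the floor and the reciprocal. Because the map $x\mapsto\lfloor x/2\rfloor$ is nondecreasing on the integers, the bound $f(n_1)\geq\lfloor\sqrt{2n_1}\rfloor$ yields $\lfloor f(n_1)/2\rfloor\geq\lfloor\lfloor\sqrt{2n_1}\rfloor/2\rfloor$. One should check that both sides are positive: for $n_1\geq 2$ we have $\sqrt{2n_1}\geq 2$, hence $\lfloor\sqrt{2n_1}\rfloor\geq 2$ and therefore $\lfloor\lfloor\sqrt{2n_1}\rfloor/2\rfloor\geq 1$, which in turn forces $\lfloor f(n_1)/2\rfloor\geq 1$. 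Taking reciprocals of these positive quantities reverses the inequality, giving $\frac{1}{\lfloor f(n_1)/2\rfloor}\leq\frac{1}{\lfloor\lfloor\sqrt{2n_1}\rfloor/2\rfloor}$. Chaining this with the upper bound $\alpha\leq\frac{1}{\lfloor f(n_1)/2\rfloor}$ supplied by Theorem \ref{thm:manyone} completes the argument.

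I do not anticipate any genuine obstacle: the corollary is a purely arithmetic relaxation of the theorem, and the only point requiring a moment of care is verifying that the floors remain at least $1$ so that passing to reciprocals is legitimate and the stated bound is meaningful. The verification of $\binom{\lfloor\sqrt{2n_1}\rfloor}{2}\leq n_1$ is the one-line estimate $\binom{s}{2}=\frac{s(s-1)}{2}\leq\frac{s^2}{2}\leq\frac{(\sqrt{2n_1})^2}{2}=n_1$ with $s=\lfloor\sqrt{2n_1}\rfloor$, which is exactly the computation recorded in the sentence preceding the statement.
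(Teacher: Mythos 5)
Your proposal is correct and takes essentially the same route as the paper: the paper obtains the corollary from Theorem~\ref{thm:manyone} via precisely the observation $\binom{\lfloor\sqrt{2n_1}\rfloor}{2}\leq \frac{(\sqrt{2n_1})^2}{2}=n_1$, which forces $f(n_1)\geq\lfloor\sqrt{2n_1}\rfloor$ and hence weakens the theorem's upper bound to the stated explicit form. Your write-up simply makes explicit the monotonicity of $x\mapsto\lfloor x/2\rfloor$ and the positivity check needed before taking reciprocals, details the paper leaves implicit.
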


\subsection{Approximation via Maximin Partitions}

We now consider the two remaining cases, $(n_1,n_2)=(2,2)$ and $(3,2)$. We show that in both cases, a positive approximation is also possible. However, the algorithms for these two cases will rely on a different idea than the previous algorithms. These positive results provide a clear distinction between the settings where it is possible to approximate the maximin share and those where it is not. For the former settings, the maximin share can be approximated within a positive factor independent of the number of goods. On the other hand, for the latter settings, there exist instances in which some agent with positive maximin share necessarily gets zero utility even when there are only four goods (Propositions \ref{prop:fourtwo} and \ref{prop:threethree}), and therefore no approximation is possible even if we allow dependence on the number of goods.

\begin{algorithm}
\caption{Algorithm for approximate maximin share when the groups contain $n_1=n_2=2$ agents (Theorem \ref{thm:twotwo}).}
\label{alg:algotwotwo}
\begin{algorithmic}[1]
\Procedure{Approximate\textendash Maximin\textendash Share\textendash 2}{}
\For{each agent $a_{ij}$}
\State Compute her maximin partition $(G_{ij},G\backslash G_{ij})$.
\EndFor
\State Partition $G$ into 16 subsets $(H_1,H_2,\dots,H_{16})$ according to whether each good belongs to $G_{ij}$ or $G\backslash G_{ij}$ for each $1\leq i,j\leq 2$.
\If{Some subset $H_p$ is \emph{important} (i.e., of value at least $1/8$ of the agent's maximin share) to both $a_{11}$ and $a_{12}$}
\State Allocate $H_p$ to the first group and the remaining goods to the second group.
\Else
\State Suppose that $H_p,H_q$ are important to $a_{11}$ and $H_r,H_s$ to $a_{12}$.
\State Find a pair from $(H_p,H_r),(H_p,H_s),(H_q,H_r),(H_q,H_s)$ that does not coincide with the important subsets for an agent in the second group.
\State Allocate that pair of subsets to the first group and the remaining goods to the second group.
\EndIf
\EndProcedure
\end{algorithmic}
\end{algorithm}

\begin{theorem}
\label{thm:twotwo}
Let $(n_1,n_2)=(2,2)$, and suppose that $\alpha$ is the best possible approximation ratio for the maximin share. Then $1/8\leq \alpha\leq 1/2$.
\end{theorem}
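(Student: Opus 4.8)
The plan is to prove the two inequalities separately: the upper bound $\alpha\le 1/2$ by exhibiting a single hard instance, and the lower bound $\alpha\ge 1/8$ by analyzing Algorithm~\ref{alg:algotwotwo}.

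For the upper bound I would use a three-good instance with $\textbf{u}_{11}=(2,1,1)$, $\textbf{u}_{12}=(1,2,1)$, $\textbf{u}_{21}=(1,1,2)$, and $\textbf{u}_{22}=(2,1,1)$. A direct check shows every agent has maximin share $2$, attained by isolating her favorite good. Since there are only three goods, in any non-wasteful allocation exactly one group receives a single good and the other receives two. The one observation that drives the argument is that for \emph{each} good $g$ and \emph{each} group, some agent in that group values $g$ at exactly $1$; hence whichever group ends up holding the lone good contains an agent who gets value $1$, i.e. only half of her maximin share. This rules out any allocation guaranteeing more than a $1/2$-fraction to everyone, so $\alpha\le 1/2$. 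The same enumeration shows that a $1/2$-fraction \emph{is} attainable and that no agent is ever forced to zero, which is consistent with the lower bound proved next.

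For the lower bound I would show that Algorithm~\ref{alg:algotwotwo} always serves every agent a $1/8$-fraction. The structural backbone is that for each agent $a_{ij}$, both halves $G_{ij}$ and $G\setminus G_{ij}$ of her maximin partition are worth at least her maximin share to her (this is exactly what a maximin partition for two parts means). Each half is a union of precisely $8$ of the $16$ atoms $H_1,\dots,H_{16}$, so by pigeonhole each half contains an atom worth at least $1/8$ of her maximin share, namely an \emph{important} atom. Thus every agent has at least two important atoms, lying in opposite halves of her own partition and therefore distinct. The first case of the algorithm is then immediate: if some atom $H_p$ is important to both agents of group~$1$, hand $H_p$ to group~$1$ (serving both) and everything else to group~$2$; since each group-$2$ agent loses only the single atom $H_p$ but owns two distinct important atoms, she keeps at least one and is served. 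Otherwise the important atoms of $a_{11}$ and $a_{12}$ are disjoint, and choosing one of $a_{11}$'s two important atoms together with one of $a_{12}$'s gives four candidate pairs to assign to group~$1$, each of which automatically serves both group-$1$ agents.

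The crux, and the step I expect to be the main obstacle, is the selection in this second case: I must give group~$1$ a pair of atoms whose removal does not eliminate both important atoms of either group-$2$ agent. A counting argument resolves it. A group-$2$ agent is endangered only when the removed pair coincides with her distinguished pair of important atoms, and she has at most one such pair; hence the two group-$2$ agents forbid at most two of the four candidate pairs, leaving at least two admissible choices. Any admissible pair simultaneously serves both group-$1$ agents (each receives her own important atom) and both group-$2$ agents (each retains an important atom in $G$ minus the two removed atoms), so every agent obtains at least a $1/8$-fraction of her maximin share, giving $\alpha\ge 1/8$. I would close by noting that the denominator $8$ is simply the number of atoms in one half of a maximin partition, $16/2$, and that the whole procedure runs in polynomial time.
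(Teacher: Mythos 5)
Your proposal is correct. The lower bound half is essentially the paper's own argument: the same $16$-atom refinement of the four maximin partitions, the same pigeonhole step producing two important atoms per agent (one in each half of her own partition, hence distinct), the same first case when group~1's agents share an important atom, and the same counting in the second case (each group-2 agent can forbid at most one of the four candidate pairs, since forbidding requires her important atoms to be exactly the removed pair, so at least two pairs remain admissible). The only genuine difference is your upper-bound instance. The paper uses four goods, $\textbf{u}_{11}=(0,2,1,1)$, $\textbf{u}_{12}=(2,0,1,1)$, $\textbf{u}_{21}=(1,1,0,0)$, $\textbf{u}_{22}=(0,0,1,1)$: the complementary supports of the group-2 agents force group~2 to receive one good from $\{g_1,g_2\}$ and one from $\{g_3,g_4\}$, after which some group-1 agent gets at most $1$ against a maximin share of $2$. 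You instead use three goods, $\textbf{u}_{11}=(2,1,1)$, $\textbf{u}_{12}=(1,2,1)$, $\textbf{u}_{21}=(1,1,2)$, $\textbf{u}_{22}=(2,1,1)$, and observe that every good is valued exactly $1$ by some agent of each group, so whichever group ends up with at most one good contains an agent receiving at most half of her maximin share of $2$. Both constructions are valid and give exactly $\alpha\le 1/2$; yours is slightly more economical, and it is a nice side observation given the paper's later remark that its lower-bound instances all use four goods and its intuition that three-good instances tend to be easy---for this particular cardinality $(2,2)$, three goods already suffice to pin the upper bound at $1/2$.
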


\begin{proof}
We first show the upper bound. Suppose that there are four goods. The utilities of the agents in the first group for the goods are $\textbf{u}_{11}=(0,2,1,1)$ and $\textbf{u}_{12}=(2,0,1,1)$, while the utilities of those in the second group are $\textbf{u}_{21}=(1,1,0,0)$ and $\textbf{u}_{22}=(0,0,1,1)$. 

In this example, both agents in the first group have a maximin share of 2, and both agents in the second group have a maximin share of 1. To ensure nonzero utility for the agents in the second group, we must allocate at least one of the first two goods and at least one of the last two goods to the group. However, this implies that some agent in the first group gets a utility of at most 1.

Next, we exhibit an algorithm that yields a $1/8$-approximation of the maximin share (Algorithm~\ref{alg:algotwotwo}). For each agent $a_{ij}$, let $(G_{ij},G\backslash G_{ij})$ be one of her maximin partitions. By definition, we have that both $u_{ij}(G_{ij})$ and $u_{ij}(G\backslash G_{ij})$ are at least the maximin share of $a_{ij}$. Let $(H_1,H_2,\dots,H_{16})$ be a partition of $G$ into 16 subsets according to whether each good belongs to $G_{ij}$ or $G\backslash G_{ij}$ for each $1\leq i,j\leq 2$; in other words, for every $i,j$, the goods in each set $H_k$ either all belong to $G_{ij}$ or all belong to $G\backslash G_{ij}$. By the pigeonhole principle, for each agent $a_{ij}$, among the eight sets $H_k$ whose union is $G_{ij}$, the set that she values most gives her a utility of at least $1/8$ of her maximin share; call this set $H_p$. Likewise, we can find a set $H_q\subseteq G\backslash G_{ij}$ that the agent values at least $1/8$ of her maximin share. We call these subsets \emph{important} to $a_{ij}$. It suffices for every agent to obtain a subset that is important to her.

If some subset $H_p$ is important to both $a_{11}$ and $a_{12}$, we allocate that subset to the first group and the remaining goods to the second group. Since each agent in the second group has at least two important subsets, and only one is taken away from them, this yields the desired guarantee. Else, two subsets $H_p,H_q$ are important to $a_{11}$ and two other subsets $H_r,H_s$ are important to $a_{12}$. We will assign one of the pairs $(H_p,H_r),(H_p,H_s),(H_q,H_r),(H_q,H_s)$ to the first group. If a pair does not work, that means that some agent in the second group has exactly that pair as her important subsets. But there are four pairs and only two agents in the second group, hence some pair must work. \qed
\end{proof}

We briefly discuss the running time of Algorithm \ref{alg:algotwotwo}. The algorithm can be implemented efficiently except for one step: computing a maximin partition of each agent. This step is NP-hard even when the two agents have identical utility functions by a straightforward reduction from the partition problem. Nevertheless, Woeginger \cite{Woeginger97} showed that a PTAS for the problem exists.\footnote{Woeginger also showed that an FPTAS for this problem does not exist unless $\text{P}=\text{NP}$.} Using the PTAS, we can compute an approximate maximin partition instead of an exact one and obtain a $\left(1/8-\epsilon\right)$-approximate algorithm for the maximin share in time polynomial in the number of goods for any constant $\epsilon>0$.

A similar idea can be used to show that a positive approximation of the maximin share is possible when $(n_1,n_2)=(3,2)$.

\begin{theorem}
\label{thm:threetwo}
Let $(n_1,n_2)=(3,2)$, and suppose that $\alpha$ is the best possible approximation ratio for the maximin share. Then $1/16\leq \alpha\leq 1/2$.
\end{theorem}

\begin{proof}
The upper bound follows from Theorem \ref{thm:twotwo} and the observation preceding Proposition \ref{prop:fourtwo}. 

For the lower bound, compute the maximin partition for each agent, and partition $G$ into 32 subsets according to which part of the partition of each agent a good belongs to. For each agent, at least 2 of the subsets are \emph{important}, i.e., of value at least $1/16$ of her maximin share. If some subset is important to both $a_{21}$ and $a_{22}$, allocate that subset to them and the remaining goods to the first group. Otherwise, we can allocate some pair of important subsets to the second group using a similar argument as in Theorem \ref{thm:twotwo}. \qed
\end{proof}

\subsection{Experimental Results}

To complement our theoretical results, we ran computer experiments to see the extent to which it is possible to approximate the maximin share in random instances. For $(n_1,n_2)=(2,2)$ and $(3,2)$, we generated 100000 random instances where there are four goods and the utility of each agent for each good is drawn independently and uniformly at random from the interval $[0,1]$. The results are shown in Table \ref{table:experiment1}. An approximation ratio of $0.9$ can be guaranteed in over $90\%$ when there are two agents in each group, and in over $80\%$ when there are three agents in one group and two in the other. In other words, an allocation that ``almost'' satisfies the maximin criterion can be found in a large majority of the instances. However, the proportion drops significantly to around $70\%$ and $50\%$ if we demand that the partition yield the full maximin share to the agents, indicating that this is a much more stringent requirement. We also ran the experiment on instances where the utilities are drawn from an exponential distribution and from a log-normal distribution. As shown in Tables \ref{table:experiment2} and \ref{table:experiment3}, the number of instances for which the (approximate) maximin criterion is satisfied is lower for both distributions than for the uniform distribution. This is to be expected since the utilities are less spread out, meaning that conflicts are more likely to occur. The heavy drop as we increase the requirement from $\alpha\geq 0.9$ to $\alpha\geq 1$ is present for these distributions as well.

We also remark here that the case with four goods seems to be the hardest case for maximin approximation. Indeed, with two goods an allocation yielding the full maximin share always exists, with three goods the maximin share is low since any partition leaves at most one good to one group, and with more than four goods there are more allocations and therefore more possibilities to exploit the differences between the utilities of the agents. This intuition aligns with the fact that the instances we use to show the approximation lower bounds in this paper all involve exactly four goods.

\begin{table}
\centering
    \begin{subtable}{\textwidth}
    \centering
    \begin{tabular}{| c | c | c | c | c | c | c |}
    \hline
      & $\alpha \geq 0.5$ & $\alpha \geq 0.6$ & $\alpha \geq 0.7$ & $\alpha \geq 0.8$ & $\alpha \geq 0.9$ & $\alpha \geq 1$ \\ \hline
    $(n_1,n_2)=(2,2)$ & 100000 & 100000 & 99937 & 98803 & 92015 & 69248 \\ \hline
    $(n_1,n_2)=(3,2)$ & 100000 & 99997 & 99672 & 96174 & 81709 & 49386 \\
    \hline
    \end{tabular}
    \caption{The uniform distribution over the interval $[0,1]$.}
    \vspace{2mm}
    \label{table:experiment1}
    \end{subtable}
    \begin{subtable}{\textwidth}
    \centering
    \begin{tabular}{| c | c | c | c | c | c | c |}
    \hline
      & $\alpha \geq 0.5$ & $\alpha \geq 0.6$ & $\alpha \geq 0.7$ & $\alpha \geq 0.8$ & $\alpha \geq 0.9$ & $\alpha \geq 1$ \\ \hline
    $(n_1,n_2)=(2,2)$ & 100000 & 99982 & 99280 & 94464 & 80683 & 55833 \\ \hline
    $(n_1,n_2)=(3,2)$ & 100000 & 99827 & 97295 & 86293 & 63914 & 36626 \\
    \hline
    \end{tabular}
    \caption{The exponential distribution with mean 1.}
    \vspace{2mm}
    \label{table:experiment2}
    \end{subtable}
    \begin{subtable}{\textwidth}
    \centering
    \begin{tabular}{| c | c | c | c | c | c | c |}
    \hline
      & $\alpha \geq 0.5$ & $\alpha \geq 0.6$ & $\alpha \geq 0.7$ & $\alpha \geq 0.8$ & $\alpha \geq 0.9$ & $\alpha \geq 1$ \\ \hline
    $(n_1,n_2)=(2,2)$ & 100000 & 99990 & 99220 & 92658 & 74966 & 55768 \\ \hline
    $(n_1,n_2)=(3,2)$ & 100000 & 99895 & 97159 & 82918 & 57068 & 36802 \\
    \hline
    \end{tabular}
    \caption{The log-normal distribution with parameters $\mu=0$ and $\sigma=1$ for the associated normal distribution.}
    \label{table:experiment3}
    \end{subtable}
\caption{Experimental results showing the number of instances, out of 100000, for which the respective maximin approximation ratio is achievable by some allocation of goods when utilities are drawn independently from the specified probability distribution.}
\end{table}


\section{Several Groups of Agents}
\label{sec:manygroups}

In this section, we consider a more general setting where there are several groups of agents. We show that when only one group contains more than a single agent, a positive approximation is possible independent of the number of goods. On the other hand, when all groups contain at least two agents and one group contain at least five agents, no approximation is possible in a strong sense.

\subsection{Positive Result}

We first show a positive result when all groups but one contain a single agent. This is a generalization of the corresponding result for two groups (Theorem \ref{thm:manyone}). The algorithm also uses the round-robin algorithm as a subroutine, but more care must be taken to account for the extra groups. Recall that when all groups contain a single agent, the best known approximation ratio is $3/4$, obtained by Ghodsi et al.'s algorithm \cite{GhodsiHaSe17}.

\begin{theorem}
\label{thm:manyones}
Let $n_1\geq 2$ and $n_2=n_3=\dots=n_k=1$. Then it is possible to give every agent at least $\frac{2}{n_1+2k-3}$ of her maximin share.
\end{theorem}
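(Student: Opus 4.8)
The plan is to prove the statement by induction on the number of groups $k$, designing a recursive algorithm that generalizes the one in Theorem~\ref{thm:manyone} by combining a ``grab'' phase with the round-robin procedure of Lemma~\ref{lem:roundrobin}. Throughout, I would normalize each agent's utility so that $u_{ij}(G)=1$. Since the maximin share is computed with $k$ parts, every agent's maximin share is then at most $1/k$, so it suffices to hand each agent an absolute utility of at least $\beta:=\frac{1}{n_1+2k-3}$: this already yields a fraction $k\beta=\frac{k}{n_1+2k-3}\ge\frac{2}{n_1+2k-3}$ of her maximin share. The base case $k=2$ is exactly Theorem~\ref{thm:manyone}.

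For the inductive step ($k\ge 3$) I would split into two cases according to whether some single agent covets a large good. \emph{Case 1}: some single agent $a_{i1}$ (with $i\ge 2$) values a good $g$ at least $\beta$. Then I allocate $g$ to $a_{i1}$ as her only good, so she receives at least $\beta$ and is satisfied, and recurse on the remaining $k-1$ groups (group $1$ together with the other single agents) over the goods $G\setminus\{g\}$. The reasoning behind Lemma~\ref{lem:addonemfs} concerns a single agent's own valuation and hence applies verbatim to each remaining agent regardless of the group structure; it guarantees that no remaining agent's maximin share decreases when we delete $g$ and drop to $k-1$ parts. Since the recursive guarantee $\frac{2}{n_1+2(k-1)-3}=\frac{2}{n_1+2k-5}$ is at least $\frac{2}{n_1+2k-3}$, every remaining agent obtains the desired fraction of her \emph{original} maximin share. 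The key point is that invoking the recursion, rather than arguing directly, is what neutralizes the danger of a single agent grabbing a good that a group-$1$ agent values highly.

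\emph{Case 2}: no single agent values any good at $\beta$ or more, so every single agent values every good below $\beta$. Here I first let each group-$1$ agent, in arbitrary order, take a good worth at least $\beta$ to her, adding it to group $1$'s bundle, and then run the round-robin algorithm on all agents who have not taken a good, placing the single agents first in the order. A single agent goes early and values every good (in particular every good removed before and during the round-robin) below $\beta$, so she receives at least $\beta$ by the counting argument behind Lemma~\ref{lem:roundrobin}, with the computation mirroring the one in Theorem~\ref{thm:manyone}; and a group-$1$ agent who grabbed a good trivially obtains at least $\beta$ through group $1$'s bundle.

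The main obstacle is the remaining case: a group-$1$ agent $a_{1j}$ who did not grab a good. Unlike the two-group setting of Theorem~\ref{thm:manyone}, I cannot simply write her utility as $1$ minus the value of the single other group, because there are now $k-1$ single groups. Instead I would use that $a_{1j}$ values every good she faces in the round-robin below $\beta$, so by Lemma~\ref{lem:roundrobin} her envy toward \emph{each} single agent is less than $\beta$; hence each of the $k-1$ single groups is worth at most $u_{1j}(\text{group }1)+\beta$ to her. Summing these $k-1$ inequalities and using $\sum_{i}u_{1j}(\text{group }i)=1$ gives $u_{1j}(\text{group }1)\ge\frac{1-(k-1)\beta}{k}=\frac{n_1+k-2}{k(n_1+2k-3)}$, which clears the required $\frac{2}{k(n_1+2k-3)}$ precisely because $n_1+k\ge 4$. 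I expect this aggregation over all single groups, together with verifying that the resulting bound still meets the threshold, to be the crux of the proof; everything else follows the template of Theorem~\ref{thm:manyone}.
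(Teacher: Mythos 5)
Your proposal is correct and takes essentially the same approach as the paper's proof: the paper likewise removes singleton agents who value some good above the threshold (iteratively rather than by explicit induction, justified exactly by Lemma~\ref{lem:addonemfs}), then lets the group-$1$ agents grab goods worth at least the threshold and runs round-robin on the remaining agents with the singleton agents first, establishing both guarantees by aggregating the envy bounds of Lemma~\ref{lem:roundrobin}. The only point you state too loosely is the analysis of the singleton agents: those who are not first in the round-robin order can envy earlier singletons, so their guarantee does not follow from the literal computation of Theorem~\ref{thm:manyone} (where the lone singleton agent is first and envies nobody) but from the same envy-corrected averaging you carry out for the group-$1$ agents --- which is precisely what the paper does, yielding a bound that meets $\beta$ with equality in the worst case, so your argument goes through.
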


\begin{proof}
Let $\alpha:=\frac{1}{n_1+2k-3}$. If some agent in a singleton group values a good at least $\alpha$ times her value for the whole set of goods, put that good as the only good in her allocation. Since her maximin share is at most $1/k\leq 1/2$ times her value for the whole set of goods, this agent obtains the desired guarantee. We will give the remaining agents their guarantees with respect to the reduced set of goods and agents. By Lemma \ref{lem:addonemfs}, the maximin share of an agent can only increase as we remove an agent and a good, and the approximation ratio $\frac{2}{n_1+2k-3}$ also increases as $k$ decreases. This implies that guarantees for the reduced instance also translate to ones for the original instance. We recompute each agent's value for the whole set of goods as well as the number of groups and repeat this step until no agent in a singleton group values a good at least $\alpha$ times her value for the whole set of goods.

Next, we normalize the utility of each agent for the whole set of goods to 1 as in Theorem \ref{thm:tworoundrobin}. We let each of the agents in the first group, in arbitrary order, take a good worth at least $\alpha$ to her if there is any. The approximation guarantee is satisfied for any agent who takes a good. Suppose that after this step, there are $n_0\leq n_1$ agents in the first group and $k_0\leq k-1$ agents in the remaining groups who have not taken a good. We run the round-robin algorithm on these $n_0+k_0$ agents, starting with the $k_0$ agents who do not belong to the first group.

Consider one of the $n_0$ agents in the first group. Since no good allocated by the round-robin algorithm is worth at least $\alpha$ to her, she has envy at most $\alpha$ toward each of the $k_0$ agents. Assume for contradiction that the bundle allocated to the first group is worth less than $\alpha$ to her. Then she values the bundle of each of the $k_0$ agents at most $2\alpha$. Hence her utility for the whole set of goods is less than $\alpha+2k_0\alpha=\frac{2k_0+1}{n_1+2k-3}\leq\frac{2k-1}{2k-1}=1$, a contradiction.

Consider now one of the $k_0$ agents in the remaining group. She has utility at least $1-(n_1-n_0)\alpha$ for the set of goods allocated by the round-robin algorithm. With respect to the bundles allocated by the round-robin algorithm, she has no envy toward herself or any of the $n_0$ agents in the first group, and she has envy at most $\alpha$ toward the remaining $k_0-1$ agents. Summing up the corresponding inequalities, averaging, and using the fact that her utility for all bundles combined is at least $1-(n_1-n_0)\alpha$, we find that her utility for her own bundle is at least $\frac{1-(n_1-n_0)\alpha-(k_0-1)\alpha}{n_0+k_0}$. It suffices to show that this is at least $\alpha$. The inequality is equivalent to $\alpha(2k_0+n_1-1)\leq 1$, which holds since $k_0\leq k-1$. \qed
\end{proof}

\subsection{Negative Result}

We next show that when all groups contain at least two agents and one group contains at least five agents, no approximation is possible.

\begin{theorem}
\label{thm:manytwos}
Let $n_1\geq 4$ if $k$ is even and $n_1\geq 5$ if $k$ is odd, and $n_2=n_3=\dots=n_k=2$. Then there exists an instance in which some agent with nonzero maximin share necessarily receives zero utility.
\end{theorem}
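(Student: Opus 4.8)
The plan is to exhibit an explicit instance generalizing the one in Proposition~\ref{prop:fourtwo}. First I would invoke the monotonicity observation preceding Proposition~\ref{prop:fourtwo}: it suffices to treat the minimal cases $n_1=4$ (for $k$ even) and $n_1=5$ (for $k$ odd), with $n_2=\dots=n_k=2$, since impossibility for these transfers to all larger group sizes. I would take $2k$ goods split into two blocks $A=\{a_1,\dots,a_k\}$ and $B=\{b_1,\dots,b_k\}$. In each two-agent group $i\in\{2,\dots,k\}$, let one agent value every good in $A$ at $1$ (and nothing in $B$) and let the other value every good in $B$ at $1$. Each such agent values exactly $k$ goods, so by placing one valued good per part her maximin share is exactly $1$.

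The key structural step is that giving both agents of every group $i\geq 2$ positive utility forces group~$i$ to receive at least one good from $A$ and at least one from $B$. Hence the $k-1$ groups other than group~$1$ collectively absorb at least $k-1$ goods of $A$ and at least $k-1$ of $B$, so group~$1$ receives at most one good of $A$ and at most one of $B$; moreover the allocator may leave any surviving pair $(a_j,b_l)$ it wishes, since any bijective assignment of the remaining $k-1$ goods of each block to the other $k-1$ groups satisfies them. (If instead some group $i\geq 2$ is denied a block, one of its agents, who has maximin share $1$, already gets zero and we are done.) Thus everything reduces to group~$1$: an agent who fails to value a subset $\bar A\subseteq A$ and a subset $\bar B\subseteq B$ is starved precisely when the surviving pair lies in the ``rectangle'' $\bar A\times\bar B$, and she has positive maximin share iff she values at least $k$ goods, i.e. iff $|\bar A|+|\bar B|\leq k$.

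I would then reformulate the task as a covering question: design the $n_1$ agents of group~$1$ so that the rectangles $\bar A_X\times\bar B_X$ cover the entire $k\times k$ grid $A\times B$, which guarantees that for every surviving pair $(a_j,b_l)$ some group-$1$ agent is starved. Each admissible rectangle satisfies $|\bar A_X|+|\bar B_X|\leq k$ and so has area at most $\lfloor k/2\rfloor\lceil k/2\rceil$. The parity in the statement then emerges from a simple area count: four such rectangles cover at most $4\lfloor k/2\rfloor\lceil k/2\rceil$ cells, which equals $k^2$ when $k$ is even but only $k^2-1$ when $k$ is odd. Hence four rectangles suffice exactly when $k$ is even (split each block in half and take the four quarter-rectangles, each of semiperimeter $k$), whereas an odd $k$ forces a fifth rectangle. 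Exhibiting the explicit coverings with $4$ and $5$ rectangles respectively, and checking that every agent's maximin share is $1$, completes the argument.

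The main obstacle I anticipate is the combinatorial core rather than any computation: recognizing that the positive-maximin-share requirement is exactly the semiperimeter bound $|\bar A_X|+|\bar B_X|\leq k$ on each agent's ``missing'' rectangle, and that covering the $k\times k$ grid under this bound is an area-limited problem whose answer flips between $4$ and $5$ with the parity of $k$. Verifying that the proposed five-rectangle cover for odd $k$ genuinely covers the grid while keeping every semiperimeter at most $k$ is the one place that requires a careful, if routine, check.
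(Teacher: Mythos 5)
Your construction is essentially the paper's own proof, recast in a cleaner covering language: the paper uses exactly your two blocks of $k$ goods with complementary two-agent groups forcing group~$1$ down to at most one good per block, and its group-$1$ valuations for even $k$ are precisely your four quarter-rectangles. Your reformulation as covering the $k\times k$ grid by ``missing'' rectangles of semiperimeter at most $k$, together with the area count $4\lfloor k/2\rfloor\lceil k/2\rceil = k^2$ versus $k^2-1$, is a nice abstraction the paper does not state; it explains, within this family of instances, why the hypothesis switches from $n_1\geq 4$ to $n_1\geq 5$ with the parity of $k$, although it is not strictly needed for the theorem, which only asks for the existence of a bad instance.

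The one genuine omission is the odd case: you refer to ``the proposed five-rectangle cover'' but never propose one, and the area count only rules out four rectangles --- it does not produce five that work. The missing cover (the one the paper uses) is a pinwheel. Write $m=(k+1)/2$ and index each block by $\{1,\dots,k\}$. Take the four missing rectangles $[m,k]\times[m+1,k]$, $[m+1,k]\times[1,m]$, $[1,m]\times[1,m-1]$, and $[1,m-1]\times[m,k]$, each of dimensions $\frac{k+1}{2}\times\frac{k-1}{2}$ and hence of semiperimeter exactly $k$; they tile the grid except for the single center cell $(m,m)$, which is assigned as the missing rectangle of a fifth agent who values every good except $a_m$ and $b_m$. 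That agent values $2k-2\geq k$ goods, so her maximin share is still $1$, and every surviving pair now starves some group-$1$ agent. With this cover exhibited and verified (a short case check on whether each coordinate is $<m$, $=m$, or $>m$), your argument is complete and coincides with the paper's proof.
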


\begin{proof}
Let $n_1=4$ if $k$ is even and $5$ if $k$ is odd, and suppose that there are $2k$ goods. In each of the groups $2,3,\dots,k$, one agent has utility 1 for the first $k$ goods and 0 for the last $k$, while the other agent has utility 0 for the first $k$ goods and 1 for the last $k$. Hence all of these agents have a maximin share of 1. To ensure that they all get nonzero utility, each group must receive one of the first $k$ and one of the last $k$ goods. This only leaves one good from the first $k$ and one from the last $k$ to the first group.

First, consider the case $k$ even. Let the utilities of the agents in the first group be given by
\begin{itemize}
\item $\textbf{u}_{11}=(\overbrace{1,1,\dots,1}^{k/2},\overbrace{0,0,\dots,0}^{k/2},\overbrace{1,1,\dots,1}^{k/2},\overbrace{0,0,\dots,0}^{k/2})$;
\item $\textbf{u}_{12}=(\overbrace{1,1,\dots,1}^{k/2},\overbrace{0,0,\dots,0}^{k/2},\overbrace{0,0,\dots,0}^{k/2},\overbrace{1,1,\dots,1}^{k/2})$;
\item $\textbf{u}_{13}=(\overbrace{0,0,\dots,0}^{k/2},\overbrace{1,1,\dots,1}^{k/2},\overbrace{1,1,\dots,1}^{k/2},\overbrace{0,0,\dots,0}^{k/2})$;
\item $\textbf{u}_{14}=(\overbrace{0,0,\dots,0}^{k/2},\overbrace{1,1,\dots,1}^{k/2},\overbrace{0,0,\dots,0}^{k/2},\overbrace{1,1,\dots,1}^{k/2})$.
\end{itemize}
All four agents have a maximin share of 1, but for any combination of a good from the first $k$ goods and one from the last $k$, some agent obtains a utility of 0.

Next, consider the case $k$ odd. Let the utilities of the agents in the first group be given by
\begin{itemize}
\item $\textbf{u}_{11}=(\overbrace{1,1,\dots,1}^{(k-1)/2},\overbrace{0,0,\dots,0}^{(k+1)/2},\overbrace{1,1,\dots,1}^{(k+1)/2},\overbrace{0,0,\dots,0}^{(k-1)/2})$;
\item $\textbf{u}_{12}=(\overbrace{1,1,\dots,1}^{(k+1)/2},\overbrace{0,0,\dots,0}^{(k-1)/2},\overbrace{0,0,\dots,0}^{(k+1)/2},\overbrace{1,1,\dots,1}^{(k-1)/2})$;
\item $\textbf{u}_{13}=(\overbrace{0,0,\dots,0}^{(k+1)/2},\overbrace{1,1,\dots,1}^{(k-1)/2},\overbrace{0,0,\dots,0}^{(k-1)/2},\overbrace{1,1,\dots,1}^{(k+1)/2})$;
\item $\textbf{u}_{14}=(\overbrace{0,0,\dots,0}^{(k-1)/2},\overbrace{1,1,\dots,1}^{(k+1)/2},\overbrace{1,1,\dots,1}^{(k-1)/2},\overbrace{0,0,\dots,0}^{(k+1)/2})$;
\item $\textbf{u}_{15}=(\overbrace{1,1,\dots,1}^{(k-1)/2},0,\overbrace{1,1,\dots,1}^{k-1},0,\overbrace{1,1,\dots,1}^{(k-1)/2})$.
\end{itemize}
All five agents have a maximin share of 1, but as in the previous case, any combination of a good from the first $k$ goods and one from the last $k$ yields no utility to some agent. \qed
\end{proof}

\section{Conclusion and Future Work}

In this paper, we study the problem of approximating the maximin share when we allocate goods to groups of agents. When there are two groups, we characterize the cardinality of the groups for which we can obtain a positive approximation of the maximin share. We also show positive and negative results for approximation when there are several groups.

We conclude the paper by listing some future directions. For two groups, closing the gap between the lower and upper bounds of the approximation ratios (Table \ref{table:twogroups}) is a significant problem from a theoretical point of view but perhaps even more so from a practical one. In particular, it would be especially interesting to determine the asymptotic behavior of the best approximation ratio when one group contains a single agent and the number of agent in the other group grows. For the case of several groups, one can ask whether it is in general possible to obtain a positive approximation when some groups contain a single agent while others contain two agents; the techniques that we present in this paper do not seem to extend easily to this case. Another question is to determine whether the dependence on the number of groups in the approximation ratio (Theorem \ref{thm:manyones}) is necessary. One could also address the issue of truthfulness or add constraints on the allocation, for example by requiring that the allocation form a contiguous block on a line, as has been done for the traditional fair division setting \cite{AmanatidisBiMa16,BouveretCeEl17,Suksompong17}. 

In light of the fact that the positive results in this paper only hold for groups with a small number of agents, a natural question is whether we can relax the fairness notion in order to allow for more positive results. For example, one could consider only requiring that a certain fraction of agents in each group, instead of all of them, think that the allocation is fair. Indeed, for two groups with any number of agents, there exists an allocation that yields at least half of the maximin share to at least half of the agents in each group \cite{SegalhaleviSu17}. Alternatively, if we use envy-freeness as the fairness notion, then a possible relaxation is to require envy-freeness only up to some number of goods, where the number of goods could depend on the number of agents in each group.

From a broader point of view, an intriguing future direction is to explore other fairness notions such as envy-freeness and proportionality in our setting where goods are allocated to groups of agents. Even though an allocation satisfying these notions does not necessarily exist, and indeed even an approximation cannot be guaranteed, we can still strive for an algorithm that produces such an allocation whenever one exists. This has been obtained for envy-freeness for groups with single agents, for example by the undercut procedure \cite{BramsKiKl12}. Such an algorithm for our generalized setting would be both interesting and important in our opinion.

\subsubsection*{Acknowledgments.} The author thanks the anonymous reviewers for their helpful feedback and acknowledges support from a Stanford Graduate Fellowship.

\newpage



\end{document}